\newtheorem{theorem}{Theorem}
\newtheorem{corollary}{Corollary}
\newtheorem{definition}{Definition}
\newtheorem{proposition}{Proposition}
\newtheorem{example}{Example}
\newenvironment{proof}[1][Proof]{\noindent\textbf{#1.} }{\ \rule{0.5em}{0.5em}}
\begin{document}

\title{\textbf{Cyclic Codes over Some Finite Quaternion Integer Rings}}
\author{Mehmet \"{O}zen, Murat G\"{u}zeltepe \\
{\small Department of Mathematics, Sakarya University, TR54187 Sakarya,
Turkey}}
\date{}
\maketitle

\begin{abstract}
In this paper, cyclic codes are established over some finite
quaternion integer rings with respect to the quaternion Mannheim
distance, and decoding algorithm for these codes is given.
\end{abstract}


\bigskip \textsl{2000 AMS Classification:}{\small \ 94B05, 94B15, 94B35, 94B60}

\textsl{Keywords:\ }{\small Block codes, Mannheim distance, Cyclic
codes, Syndrome decoding}

\section{Introduction }

Mannheim distance, which is much better suited for coding over two
dimensional signal space than the Hamming distance, was introduced
by Huber \cite{1}. Moreover, Huber constructed one Mannheim error
correcting codes, which are suitable for quadrature amplitude
modulation (QAM)-type modulations \cite{1}. Cyclic codes over some
finite rings with respect to the Mannheim metric were obtained by
using Gaussian integers in \cite{2}. Later, in \cite{3}, using
quaternion Mannheim metric, also called Lipschitz metric \cite{4},
perfect codes over some finite quaternion integer rings were
obtained and these codes were decoded.

The rest of this paper is organized as follows. In Section II,
quaternion integers and some fundamental algebraic concepts have
been considered. In Section III, we construct cyclic codes over
some quaternion integer rings with respect to quaternion Mannheim
metric.

\section{Quaternion Integers}

\begin{definition}The Hamilton Quaternion Algebra over the set of the real numbers
($\mathcal{R} $), denoted by $H(\mathcal{R})$, is the associative
unital algebra given by the following representation:

i)$H(\mathcal{R})$ is the free $\mathcal{R}$ module over the
symbols $1,i,j,k$, that is, $ H(R) = \{ a_0  + a_1 i +$ $ a_2 j +
a_3 k:\;a_0 ,a_1 ,a_2 ,a_3  \in R\}$;

ii)1 is the multiplicative unit;

iii) $ i^2  = j^2  = k^2  =  - 1$;

iv) $ ij =  - ji = k,\;ik =  - ki = j,\;jk =  - kj = i$ \cite{5}.
\end{definition}

The set $H(\mathcal{Z})$, $ \;H(\mathcal{Z}) = \left\{ {a_0  + a_1
i + a_2 j + a_3 k:\;a_0 ,a_1 ,a_2 ,a_3  \in \mathcal{Z}}
\right\}$, is a subset of $H(\mathcal{R})$, where $\mathcal{Z}$ is
the set of all integers. If $ q = a_0  + a_1 i + a_2 j + a_3 k$ is
a quaternion integer, its conjugate quaternion is $\overline q =
a_0  - (a_1 i + a_2 j + a_3 k)$.The norm of  $q$ is $ N(q) =
q.\overline q  = a_0^2  + a_1^2  + a_2^2  + a_3^2$. A quaternion
integer consists of two parts which are the complete part and the
vector part. Let $ q = a_0  + a_1 i + a_2 j + a_3 k $ be a
quaternion integer. Then its complete part is $a_0$ and its vector
part is $ a_1 i + a_2 j + a_3 k $. The commutative property of
multiplication does not hold for quaternion integers. However, if
the vector parts of quaternion integers are parallel to each
other, then their product is commutative. Define $ H(K_1 ) $ as
follows:$$ \;H(K_1 ) = \left\{ {a_0  + a_1 (i + j + k):\;a_0 ,a_1
\in Z} \right\} $$ which is a subset of quaternion integers. The
commutative property of multiplication holds over $ H(K_1 ) $.

\begin{theorem}For every odd, rational prime $p \in \mathcal{N}$, there exists a prime
$\pi  \in H(\mathcal{Z})$, such that $N(\pi ) = p = \pi \overline
\pi $. In particular, $p$  is not prime in $ H(Z) $ \cite{5}.
\end{theorem}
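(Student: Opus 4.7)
The plan is to combine Lagrange's four-square theorem with the multiplicativity of the quaternion norm. First, I would invoke Lagrange's theorem to write the odd rational prime $p$ as
$$p = a_0^2 + a_1^2 + a_2^2 + a_3^2$$
with $a_0, a_1, a_2, a_3 \in \mathcal{Z}$, and set
$$\pi = a_0 + a_1 i + a_2 j + a_3 k \in H(\mathcal{Z}).$$
Using the definition of the norm recorded just before the theorem, this yields immediately $N(\pi) = \pi\overline{\pi} = a_0^2 + a_1^2 + a_2^2 + a_3^2 = p$, so the candidate prime and the norm identity are both in hand.

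Next, I would show that such a $\pi$ is prime (equivalently, irreducible) in $H(\mathcal{Z})$. The key ingredient is the multiplicativity of the norm: for any $\alpha, \beta \in H(\mathcal{Z})$,
$$N(\alpha\beta) = (\alpha\beta)\overline{(\alpha\beta)} = \alpha\beta\,\overline{\beta}\,\overline{\alpha} = \alpha N(\beta)\overline{\alpha} = N(\alpha)N(\beta),$$
where the third equality uses $\overline{\alpha\beta} = \overline{\beta}\,\overline{\alpha}$ and the last uses that $N(\beta)$ is a rational integer and therefore central. Now if $\pi = \alpha\beta$ in $H(\mathcal{Z})$, then $p = N(\pi) = N(\alpha)N(\beta)$, so primality of $p$ in $\mathcal{Z}$ forces $N(\alpha)=1$ or $N(\beta)=1$. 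A direct check shows that the only elements of $H(\mathcal{Z})$ of norm $1$ are the eight units $\pm 1, \pm i, \pm j, \pm k$, so one of $\alpha,\beta$ is a unit and the factorization is trivial. Hence $\pi$ is prime.

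For the final ``in particular'' clause, the very identity $p = \pi\overline{\pi}$ exhibits $p$ as a product of two factors in $H(\mathcal{Z})$ of norm $p>1$, neither of which is a unit. Therefore $p$ is reducible, and in particular not prime, in $H(\mathcal{Z})$.

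The only real obstacle is the existence of the four-square representation of $p$; every subsequent step is a routine norm computation. Since Lagrange's four-square theorem is a standard classical result (and in fact admits a short descent proof for an odd prime via the identity $(x_0^2+x_1^2+x_2^2+x_3^2)(y_0^2+y_1^2+y_2^2+y_3^2) = N(\alpha)N(\beta)$ applied to suitable $\alpha,\beta$), I would simply cite it and use it as a black box, in the same spirit in which the paper cites \cite{5}.
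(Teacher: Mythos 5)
Your argument is correct, but note that the paper itself gives no proof of this statement at all --- it is quoted verbatim from \cite{5} --- so the only meaningful comparison is with the argument in that reference. There the four-square theorem is not taken as a black box; rather, the relevant case of it is essentially proved along the way: one first finds integers $x,y$ with $x^2+y^2+1\equiv 0 \pmod{p}$ by a pigeonhole count on the sets $\{x^2\}$ and $\{-1-y^2\}$ modulo $p$, observes that $p$ then divides $N(1+xi+yj)$ without dividing $1+xi+yj$ itself, and extracts from this element a factor $\pi$ of norm exactly $p$ via the division/gcd machinery of the quaternion order. Your route compresses all of that into a citation of Lagrange's theorem and retains only the easy half: the computation $N(\pi)=p$, the multiplicativity $N(\alpha\beta)=N(\alpha)N(\beta)$ (your derivation via $\overline{\alpha\beta}=\overline{\beta}\,\overline{\alpha}$ and centrality of $N(\beta)$ is sound), the identification of the eight norm-one elements $\pm 1,\pm i,\pm j,\pm k$ as the units, and the resulting irreducibility of $\pi$ together with the reducibility of $p=\pi\overline{\pi}$. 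That is a legitimate trade: your proof is shorter and cleaner, at the price of hiding the genuinely quaternionic content (the descent or gcd step) inside the cited black box; since the classical descent proof of Lagrange's theorem needs no quaternions, there is no circularity. One point worth stating explicitly: because $H(\mathcal{Z})$ is noncommutative and not a unique factorization domain, ``prime'' here must be read as ``irreducible'' (consistent with the norm criterion of Corollary 1), and irreducibility is precisely what your norm argument delivers.
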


\begin{corollary}$\pi  \in H(\mathcal{Z})$ is prime in
$H(\mathcal{Z})$ if and only if $ N(\pi )$ is prime in
$\mathcal{Z}$ \cite{5}.
\end{corollary}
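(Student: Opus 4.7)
The plan is to use the multiplicativity of the quaternion norm, $N(\alpha\beta) = N(\alpha)N(\beta)$, together with the statement of Theorem~1 from the excerpt, to connect factorizations in $H(\mathcal{Z})$ with factorizations in $\mathcal{Z}$. Both implications follow from the same basic mechanism, but run in opposite directions: for the ``if'' direction the norm forces any attempted nontrivial factorization of $\pi$ to collapse, while for the ``only if'' direction any nontrivial factorization of $N(\pi)$ would let us exhibit a nontrivial factorization of $\pi$ using Theorem~1.

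First I would prove the easier ``if'' direction. Assume $N(\pi)$ is a rational prime and suppose $\pi = \alpha \beta$ with $\alpha,\beta \in H(\mathcal{Z})$. Taking norms gives $N(\pi) = N(\alpha)N(\beta)$, so one of $N(\alpha),N(\beta)$ equals $1$, meaning one of $\alpha,\beta$ is a unit of $H(\mathcal{Z})$ (namely $\pm 1, \pm i, \pm j, \pm k$). Hence $\pi$ admits no nontrivial factorization and is prime.

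Next I would handle the ``only if'' direction. Assume $\pi$ is prime in $H(\mathcal{Z})$. Since $\pi \mid \pi\bar\pi = N(\pi)$, and $N(\pi)$ factors into rational primes $p_1 p_2 \cdots p_r$ in $\mathcal{Z}$, primality of $\pi$ forces $\pi \mid p_i$ for some $i$. Write $p_i = \pi\gamma$; taking norms gives $p_i^2 = N(\pi)N(\gamma)$, so $N(\pi) \in \{1, p_i, p_i^2\}$. The value $1$ is ruled out because units are not prime. If $N(\pi) = p_i^2$, then $N(\gamma) = 1$, so $\gamma$ is a unit and $\pi$ is associate to $p_i$; but Theorem~1 exhibits $p_i = \alpha\bar\alpha$ as a nontrivial product in $H(\mathcal{Z})$, contradicting primality of $\pi$. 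The only remaining possibility is $N(\pi) = p_i$, a rational prime.

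The main obstacle I anticipate is bookkeeping caused by non-commutativity of $H(\mathcal{Z})$: ``divides'' could mean left or right divisibility, and the argument that $\pi \mid p_i$ forces $p_i = \pi\gamma$ must be applied on the correct side, with $N(\gamma) \in \mathcal{Z}$ commuting freely so that the norm equation $p_i^2 = N(\pi)N(\gamma)$ is unambiguous. A secondary subtlety is that Theorem~1 is stated only for \emph{odd} rational primes; the case $p_i = 2$ would need to be verified separately (for instance by observing directly that $2 = (1+i)(1-i)$ provides the required nontrivial factorization, so the same contradiction goes through).
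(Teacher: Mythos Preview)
The paper does not supply its own proof of this corollary: it is simply quoted from reference~[5] (Davidoff--Sarnak--Valette) with no argument, so there is nothing in the paper to compare your proposal against line by line.

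That said, your argument is the standard one and is correct. A few remarks. First, your ``if'' direction shows that $\pi$ is \emph{irreducible}; in the quaternionic setting the word ``prime'' is normally taken to mean exactly this (and that is how [5] uses it), so nothing further is needed---but be aware that your ``only if'' direction silently uses the stronger divisor property ``$\pi\mid ab\Rightarrow \pi\mid a$ or $\pi\mid b$'' when you pass from $\pi\mid p_1\cdots p_r$ to $\pi\mid p_i$. If one works only with irreducibility, the cleaner phrasing is: pick any rational prime $p$ dividing $N(\pi)$; by Theorem~1 (or the explicit $2=(1+i)(1-i)$ when $p=2$) write $p=\alpha\bar\alpha$, so $\alpha\bar\alpha\mid \pi\bar\pi$, and then argue with norms that $N(\pi)=p$. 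Second, your handling of the left/right divisibility issue is fine, since norms land in $\mathcal{Z}$ and the key equation $p_i^2=N(\pi)N(\gamma)$ is symmetric. Third, your patch for $p_i=2$ is exactly right and closes the only gap left by Theorem~1.
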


\begin{theorem}If  $a$ and $b$  are relatively prime integers then
$ H(K_1 )/\left\langle {a + b(i + j + k)} \right\rangle $ is
isomorphic to $ Z_{a^2  + 3b^2 } $ \cite{3,5,7}.
\end{theorem}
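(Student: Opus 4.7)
The plan is to identify $H(K_1)$ with a more familiar commutative ring and then apply the first isomorphism theorem to a natural map from $\mathbb{Z}$. Set $\omega = i+j+k$. Using the multiplication rules of the quaternions, the cross terms $ij+ji$, $ik+ki$, $jk+kj$ all vanish, so $\omega^{2} = i^{2}+j^{2}+k^{2} = -3$. Thus every element of $H(K_1)$ has the form $a_0 + a_1\omega$ with $\omega^{2}=-3$, and $H(K_1)$ is naturally isomorphic to the commutative ring $\mathbb{Z}[\omega] \cong \mathbb{Z}[x]/\langle x^{2}+3\rangle$. The problem reduces to showing $\mathbb{Z}[\omega]/\langle a+b\omega\rangle \cong \mathbb{Z}_{a^{2}+3b^{2}}$.

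Next I would define the ring homomorphism
\[
\phi : \mathbb{Z} \longrightarrow H(K_1)/\langle a+b\omega\rangle, \qquad \phi(n) = n + \langle a+b\omega\rangle,
\]
and argue surjectivity. In the quotient we have $b\omega \equiv -a$, and multiplying this congruence by $\omega$ and using $\omega^{2}=-3$ gives $a\omega \equiv 3b$. This is where coprimality is essential: by B\'ezout there exist integers $u,v$ with $ua+vb=1$, so
\[
\omega \equiv (ua+vb)\omega = u(a\omega)+v(b\omega) \equiv 3bu - av \pmod{a+b\omega}.
\]
Consequently any $x_0+x_1\omega$ is congruent mod $a+b\omega$ to the integer $x_0 + x_1(3bu-av)$, proving $\phi$ is onto.

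For the kernel, an integer $n$ lies in $\langle a+b\omega\rangle$ iff there exist $c,d \in \mathbb{Z}$ with
\[
n = (c+d\omega)(a+b\omega) = (ac-3bd) + (bc+ad)\omega.
\]
Comparing coefficients forces $bc+ad=0$; with $\gcd(a,b)=1$ this yields $c = ak$ and $d = -bk$ for some $k\in\mathbb{Z}$, so $n = k(a^{2}+3b^{2})$. Hence $\ker\phi = (a^{2}+3b^{2})\mathbb{Z}$, and the first isomorphism theorem delivers the desired isomorphism $H(K_1)/\langle a+b(i+j+k)\rangle \cong \mathbb{Z}_{a^{2}+3b^{2}}$.

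The only non-routine step is surjectivity: without $\gcd(a,b)=1$ one cannot clear $\omega$ from coset representatives, so the coprimality hypothesis must be invoked at exactly the point where B\'ezout is used. The kernel computation and the reduction to $\mathbb{Z}[\sqrt{-3}]$ are then straightforward bookkeeping.
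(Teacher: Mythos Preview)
Your argument is correct. The identification $H(K_1)\cong\mathbb{Z}[\omega]$ with $\omega^{2}=-3$, the surjectivity via B\'ezout, and the kernel computation all go through as written; the only small point worth a remark is the degenerate case $a=0$, $b=\pm 1$, where the deduction ``$c=ak$, $d=-bk$'' needs a one-line separate check, but the conclusion still holds since $\langle\omega\rangle\cap\mathbb{Z}=3\mathbb{Z}=(a^{2}+3b^{2})\mathbb{Z}$.

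As for comparison: the paper does not actually supply a proof of this theorem. It is stated with citations to \cite{3,5,7} and used as a black box thereafter, so there is no ``paper's own proof'' to compare against. Your self-contained argument via the first isomorphism theorem is the standard and natural route (the same idea appears, for instance, in the Dresden--Dymacek treatment of Gaussian-integer quotients cited as \cite{7}), and it fills in exactly what the paper omits.
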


\section{Cyclic Codes over Quaternion Integer Rings }

Let $ H(K_1 )_{\pi ^k }$ be the residue class of $ H(K_1 )_\pi$
modulo $\pi^k$, where $k$ is any positive integer and $\pi$ is a
prime quaternion integer. According to the modulo function $\mu
:\mathcal{Z}_{P^k }  \to H(K_1 )_{\pi ^k }$ defined by
\begin{equation} \label{eq:1}  g \to g - [\frac{{g.\overline \pi
}}{{\pi .\overline \pi  }}]\pi \;(\bmod \pi ^k )\end{equation} $
H(K_1 )_{\pi ^k }$ is isomorphic to $Z_{p^k }$, where $p = \pi
\overline \pi$ and $p$ is an odd prime. A quaternion cyclic codes
$C$ of length $n$ is a linear code $C$ of length $n$ with property
$$(c_0 ,c_1 ,...,c_{n - 1} ) \in C \Rightarrow (c_{n - 1} ,c_0 ,c_1
,...,c_{n - 2} ) \in C.$$ In this case, we have a bijective
\begin{equation} \label{eq:2} \begin{array}{*{20}c}
   {H(K_1 )_{_{\pi ^k } }^n  \to H(K_1 )_{\pi ^k } [x]/(x^n  - 1)}  \\
   {(c_0 ,c_1 ,...,c_{n - 1} )\quad \;\quad  \mapsto c_0  + c_1 x +  \cdots  + c_{n - 1} x^{n - 1}  + (x^n  - 1)}  \\
\end{array}\end{equation}
To put it simply, we write $ c_0  + c_1 x +  \cdots  + c_{n - 1}
x^{n - 1}$ for $ c_0  + c_1 x +  \cdots  + c_{n - 1} x^{n - 1}  +
(x^n - 1)$. A nonempty set of $ H(K_1 )_{_{\pi ^k } }^n$ is a
$H(K_1 )_{\pi ^k }$-cyclic code if and only if its image under (2)
is an ideal of $ H(K_1 )_{\pi ^k } [x]/(x^n  - 1)$. More
information on cyclic codes can be found in \cite{6}.

\begin{definition}Let $ \alpha ,\beta  \in H(K_1 )_\pi
$ and $ \gamma  = \beta  - \alpha  = a + b(i + j + k)\;(\bmod \pi
)$, where $\pi$  is a prime quaternion integer. Let the quaternion
Mannheim weight of $\gamma$  be defined as $$ w_{QM} (\gamma ) =
\left| a \right| + 3\left| b \right|$$ the quaternion Mannheim
distance $d_{QM} $ between $\alpha$ and $\beta$ is defined as $$
d_{QM} (\alpha ,\beta ) = w_{QM} (\gamma ). \cite{3}$$
\end{definition}

\begin{proposition}Let $\pi=a+b(i+j+k)$ be a prime in the set $H(K_1 )$ and
let $p=a^2+3b^2$ be prime in $\mathcal{Z}$. If $g$ is a generator
of $H(K_1 )_{\pi ^2 }^ *$, then $g^{\phi (p^2 )/2}  \equiv  -
1\;(\bmod \,\pi ^2 )$.
\end{proposition}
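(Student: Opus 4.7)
The plan is to reduce the statement to an elementary fact about the unique element of order two in a cyclic group of even order.

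First, I would invoke the reduction map $\mu$ in equation~(\ref{eq:1}) together with Theorem~4: the paragraph following Theorem~4 asserts that $H(K_1)_{\pi^{k}}\cong \mathbb{Z}_{p^{k}}$ for every positive integer $k$, where $p=\pi\overline{\pi}$ is an odd prime. Specializing to $k=2$ gives a ring isomorphism $H(K_1)_{\pi^{2}}\cong \mathbb{Z}_{p^{2}}$, and hence an isomorphism of unit groups $H(K_1)_{\pi^{2}}^{*}\cong \mathbb{Z}_{p^{2}}^{*}$. Since $\mathbb{Z}_{p^{2}}^{*}$ is cyclic of order $\phi(p^{2})=p(p-1)$, so is $H(K_1)_{\pi^{2}}^{*}$; in particular the existence of a generator $g$ is guaranteed.

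Next, I would argue that $p$ is odd (being an odd prime), so $\phi(p^{2})=p(p-1)$ is even and $\phi(p^{2})/2$ is an integer. In a cyclic group of even order there is exactly one element of order $2$. On one hand, $g^{\phi(p^{2})/2}$ has order exactly $2$ because $g$ has order $\phi(p^{2})$. On the other hand, the class of $-1$ also has order $2$ modulo $\pi^{2}$: clearly $(-1)^{2}\equiv 1\,(\bmod\,\pi^{2})$, and $-1\not\equiv 1\,(\bmod\,\pi^{2})$, since otherwise $\pi^{2}\mid 2$, contradicting $N(\pi^{2})=p^{2}\geq 9>4=N(2)$. By uniqueness these two elements coincide, yielding $g^{\phi(p^{2})/2}\equiv -1\,(\bmod\,\pi^{2})$.

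I do not expect a genuine obstacle here; the one small point requiring care is the justification that $-1\not\equiv 1\,(\bmod\,\pi^{2})$, which is handled by the norm comparison above. The rest is simply bookkeeping with the cyclic structure of $\mathbb{Z}_{p^{2}}^{*}$ transported through the isomorphism $\mu$.
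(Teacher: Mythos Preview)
Your proof is correct and follows essentially the same route as the paper's: both invoke the isomorphism $H(K_1)_{\pi^{2}}\cong \mathcal{Z}_{p^{2}}$ to obtain a cyclic unit group of order $\phi(p^{2})$, and both rule out $-1\equiv 1\pmod{\pi^{2}}$ by a norm comparison. The only cosmetic difference is that you invoke the uniqueness of the order-$2$ element in a cyclic group of even order, whereas the paper writes $-1=g^{k}$ for some $1\le k\le \phi(p^{2})$ and then pins down $k=\phi(p^{2})/2$ via divisibility.
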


\begin{proof}
If $N(\pi)$ is a prime integer in $\mathcal{Z}$, then the complete
part and the coefficient of the vector part of $\pi^2$ are
relatively integer. So, $\mathcal{Z}_{p^2}$ is isomorphic to
$H(K_1 )_{\pi ^2 }$ (See Theorem 2). If $g$ is a generator of
$H(K_1 )_{\pi ^2 }^ *$, then $g,\,g^2 ,\,...,\,g^{\phi (p^2 )}$
constitute a reduced residue system modulo $\pi ^2$ in $H(K_1
)_{\pi ^2 }$. Therefore, there is a positive integer $k$ as $g^k
\equiv  - 1\,(\bmod \,\pi ^2 )$, where $1 \le k \le \phi (p^2 )$.
Hence, we can infer $g^{2k}  \equiv 1\,(\bmod \,\pi ^2 )$. Since
$\left. {\phi (p^2 )} \right|2k$ and $2 \le 2k \le 2\phi (p^2 )$,
we obtain $\phi (p^2 )=k$ or $\phi (p^2 )=2k$. If $\phi (p^2 )$
was equal to $k$, we should have $\left. {\pi ^2 } \right|2$, but
this would contradict the fact that $ N(\pi ^2 ) > 2$.
\end{proof}

\begin{proposition} Let $\pi _k  = a_k  + b_k (i + j + k)$ be
distinct primes in $H(K_1 )$ and let $p_k  = a_k^2  + 3b_k^2$ be
distinct primes in $\mathcal{Z}$, where $k = 1,2,...,m$. If $g$ is
a generator of $H(K_1 )_{\pi ^k }^ *$, then $g^{\phi (p^k )/2}
\equiv  - 1\;(\bmod \,\pi ^k )$.
\end{proposition}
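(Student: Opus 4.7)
The plan is to follow the template of Proposition~1 verbatim, replacing the modulus $\pi^2$ by the product $\pi_1\pi_2\cdots\pi_m$ and invoking Theorem~2 applied to that product rather than to $\pi^2$. Read this way, both the unit group $H(K_1)^*_{\pi_1\cdots\pi_m}$ and the exponent $\phi(p_1\cdots p_m)/2$ referred to in the statement match the pattern of the previous proposition.

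First I would multiply out $\pi_1\cdots\pi_m = a + b(i+j+k)$ inside the commutative subring $H(K_1)$ and verify that $\gcd(a,b)=1$. If some rational prime $q$ divided both $a$ and $b$, then $q^2$ would divide $a^2+3b^2 = N(\pi_1\cdots\pi_m) = p_1p_2\cdots p_m$; but the $p_k$ are distinct primes in $\mathcal{Z}$, so no prime square divides their product. Coprimality in hand, Theorem~2 supplies the ring isomorphism $H(K_1)/\langle\pi_1\cdots\pi_m\rangle \cong \mathcal{Z}_{p_1p_2\cdots p_m}$, and hence a cyclic unit group of order $\phi(p_1p_2\cdots p_m)$.

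Now suppose $g$ is a generator. Then $g, g^2,\ldots,g^{\phi(p_1\cdots p_m)}$ exhaust a reduced residue system modulo $\pi_1\cdots\pi_m$, and since $-1$ is a unit there is a least positive $k$ with $g^k\equiv -1\pmod{\pi_1\cdots\pi_m}$. Squaring yields $g^{2k}\equiv 1$, so $\phi(p_1\cdots p_m)\mid 2k$; combined with $1\le k\le\phi(p_1\cdots p_m)$ this forces $k=\phi(p_1\cdots p_m)/2$ or $k=\phi(p_1\cdots p_m)$. The latter would give $-1\equiv 1\pmod{\pi_1\cdots\pi_m}$, i.e.\ $\pi_1\cdots\pi_m\mid 2$; taking norms gives $p_1\cdots p_m\mid 4$, impossible since each $p_k$ is an odd prime. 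Hence $k=\phi(p_1\cdots p_m)/2$, as claimed.

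The main obstacle I anticipate is the coprimality step for the coefficients of the expanded product, since without it Theorem~2 cannot be applied; fortunately it falls out almost for free from multiplicativity of the norm together with the distinctness of the $p_k$, as outlined above. Once this is in place, the remainder of the proof is a direct translation of Proposition~1, with $\pi^2$ and $p^2$ replaced throughout by $\pi_1\cdots\pi_m$ and $p_1\cdots p_m$ respectively.
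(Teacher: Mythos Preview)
Your argument contains a substantive gap in the second paragraph. From the ring isomorphism $H(K_1)/\langle\pi_1\cdots\pi_m\rangle \cong \mathcal{Z}_{p_1p_2\cdots p_m}$ you immediately conclude ``hence a cyclic unit group of order $\phi(p_1p_2\cdots p_m)$.'' This is false for $m\ge 2$: by the Chinese Remainder Theorem $\mathcal{Z}_{p_1\cdots p_m}^{\,*} \cong \mathcal{Z}_{p_1}^{\,*}\times\cdots\times\mathcal{Z}_{p_m}^{\,*}$, and since each $p_k$ is an odd prime every factor has even order $p_k-1$; a direct product of nontrivial groups all of whose orders share the common factor $2$ is never cyclic. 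Thus under your reading there is no generator $g$ at all, the reduced residue system is not enumerated by powers of a single element, and the $k$-versus-$2k$ dichotomy in your third paragraph has no foundation. At best the statement becomes vacuously true, but your asserted cyclicity is simply wrong.

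This strongly suggests that the intended modulus is a prime \emph{power} $\pi^k$ of a single prime, not the product $\pi_1\cdots\pi_m$; for prime powers $\mathcal{Z}_{p^k}^{\,*}$ genuinely is cyclic, and then the argument of Proposition~1 goes through verbatim with $\pi^2,p^2$ replaced by $\pi^k,p^k$. Indeed the paper's own proof is the single line ``This is certain from Proposition~1,'' which indicates precisely this: the authors view the result as the immediate extension of the $k=2$ case to arbitrary exponent, with the list $\pi_1,\ldots,\pi_m$ apparently meant to say the conclusion holds for each such prime separately. Your coprimality-of-coefficients computation and the detailed replay of Proposition~1 are therefore aimed at the wrong target.
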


\begin{proof}This is certain from Proposition 1.
\end{proof}

\begin{theorem} Let $\pi  = a + b(i + j + k)$ be a prime in $H(K_1
)$ and let $p = a^2  + 3b^2$ be a prime in $\mathcal{Z}$, where
$a,b \in \mathcal{Z}$. Then, cyclic codes whose lengths are $\phi
(p^2 )/2$ are obtained.
\end{theorem}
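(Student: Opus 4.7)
The strategy is to feed the existence statement of Proposition 1 into the ideal-theoretic description of cyclic codes given by the bijection (2). Proposition 1 provides a unit $g \in H(K_1)_{\pi^2}^*$ with $g^{\phi(p^2)/2}\equiv -1\pmod{\pi^2}$. Since $-1\not\equiv 1\pmod{\pi^2}$ (because $p$ is odd, so $\pi^2\nmid 2$), the multiplicative order of $g$ is exactly $\phi(p^2)=2n$, where $n := \phi(p^2)/2$. Setting $\alpha := g^2$, the element $\alpha$ has order exactly $n$ and is therefore a primitive $n$-th root of unity in the ring $H(K_1)_{\pi^2}$.

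From this I would build the cyclic codes explicitly. Because $\alpha^n = 1$, dividing $x^n-1$ by $x-\alpha$ in $H(K_1)_{\pi^2}[x]$ leaves remainder $\alpha^n - 1 = 0$, so $x-\alpha$ divides $x^n - 1$. Writing $x^n-1 = (x-\alpha)h(x)$, the image of $h(x)$ in $H(K_1)_{\pi^2}[x]/(x^n - 1)$ generates a proper nonzero ideal, and via the correspondence (2) this ideal pulls back to a cyclic code $C$ of length $n$, as required. To produce a richer family, one can iterate: for any set $S\subseteq\{0,1,\ldots,n-1\}$ on which all differences $\alpha^i - \alpha^j$ are units of $H(K_1)_{\pi^2}$, the polynomial $\prod_{i\in S}(x-\alpha^i)$ still divides $x^n-1$, and each resulting quotient can serve as the generator polynomial of a cyclic code of length $n$.

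The main technical obstacle I expect is that $H(K_1)_{\pi^2}\cong \mathbb{Z}_{p^2}$ is a local ring rather than a field, so distinct $n$-th roots of unity need not have a unit difference and the tempting identity $x^n-1=\prod_{i=0}^{n-1}(x-\alpha^i)$ can fail in general. This is not required for the bare assertion of the theorem, but it matters if one wants to enumerate or classify the codes so obtained. A clean way to navigate the issue is to transport the problem along the isomorphism of Theorem 2 to $\mathbb{Z}_{p^2}[x]$, factor $x^n-1$ in the residue field $\mathbb{F}_p[x]$ first, and then lift those factorisations coprime-factor by coprime-factor via Hensel's lemma. For the statement at hand, however, the single factorisation $x^n-1=(x-\alpha)h(x)$ already produces a nontrivial cyclic code of length $n=\phi(p^2)/2$, which is what is claimed.
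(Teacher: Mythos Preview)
Your argument is correct and rests on the same core idea as the paper's: pull a primitive element out of Proposition~1, use it to split off a linear factor from a degree-$n$ binomial, and read the resulting ideal as a cyclic code of length $n=\phi(p^2)/2$. The execution differs in two small but noteworthy ways. First, the paper does \emph{not} pass to $\alpha=g^{2}$; it keeps the generator $g$ itself, uses $g^{n}\equiv -1$, and factors $x^{n}+1=(x-g)Q(x)$, so its ideal lives in $H(K_1)_{\pi^{2}}[x]/\langle x^{n}+1\rangle$ rather than in $H(K_1)_{\pi^{2}}[x]/\langle x^{n}-1\rangle$. Strictly speaking that yields a negacyclic code, whereas your version with $x^{n}-1$ matches the paper's own definition~(2) verbatim; either choice proves the existence statement. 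Second, the paper takes the linear factor $x-g$ as the generator polynomial (so the code is free of rank $n-1$, cf.\ the remark after the proof and Example~1), while you take the cofactor $h(x)$; both generate proper nonzero ideals, but the paper's choice is the one it actually exploits downstream. Your closing discussion of Hensel lifting and the failure of $x^{n}-1=\prod(x-\alpha^{i})$ over the local ring $\mathbb{Z}_{p^{2}}$ is accurate and goes beyond what the paper needs or states.
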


\begin{proof} $H(K_1 )_{\pi ^2 }^ *$ has a generator since $\mathcal{Z}_{p^2 }  \cong H(K_1 )_{\pi ^2
}$. Let the generator be $g$. Then we get $g^{\phi (p^2 )}  = 1$
and $g^{\phi (p^2 )/2}  =  - 1 $. Hence, we can write $$x^{{{\phi
(p^2 )} \mathord{\left/
 {\vphantom {{\phi (p^2 )} 2}} \right.
 \kern-\nulldelimiterspace} 2}}  + 1 = (x - g)Q(x)\;(\bmod \,\pi ^2 )\;({\rm for}\,x = g{\rm
 )}.$$ In this situation, $(x - g)$ is an ideal of ${{H(K_1 )_{\pi ^2 } [x]} \mathord{\left/
 {\vphantom {{H(K_1 )_{\pi ^2 } [x]} {\left\langle {x^{{{\phi (p^2 )} \mathord{\left/
 {\vphantom {{\phi (p^2 )} 2}} \right.
 \kern-\nulldelimiterspace} 2}}  + 1} \right\rangle }}} \right.
 \kern-\nulldelimiterspace} {\left\langle {x^{{{\phi (p^2 )} \mathord{\left/
 {\vphantom {{\phi (p^2 )} 2}} \right.
 \kern-\nulldelimiterspace} 2}}  + 1} \right\rangle }}$, i.e., it generates a cyclic code.
\end{proof}

If the generator polynomial is taken as a monic polynomial, all
components of any row of the generator matrix do not consist of
zero divisors. Therefore, these codes are free $H(K_1 )_{\pi ^2 }$
modules.

\begin{proposition}Let $\pi _1  = a + b(i + j + k)$,
$\pi _2  = c + d(i + j + k)$ be primes in $H(K_1 )$ and let $p_1 =
a^2  + 3b^2$, $p_2  = c^2  + 3d^2$ be primes in $\mathcal{Z}$.
Then, there are two elements of $H(K_1 )_{\pi _1 \pi _2 }^ *$ such
that $e^{\phi (p_2 )}  \equiv 1\;(\bmod \pi _1 \pi _2 )$ and
$f^{\phi (p_1 )}  \equiv 1\;(\bmod \pi _1 \pi _2 )$.
\end{proposition}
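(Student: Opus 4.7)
The plan is to invoke the Chinese Remainder Theorem on the commutative ring $H(K_1)$. Since $\pi_1$ and $\pi_2$ are distinct primes in $H(K_1)$ with distinct prime norms $p_1 \ne p_2$, the ideals $\langle \pi_1 \rangle$ and $\langle \pi_2 \rangle$ are maximal (by Theorem 2 of the excerpt, $H(K_1)/\langle\pi_i\rangle \cong \mathbb{Z}_{p_i}$ is a field) and therefore comaximal. Hence
\[
H(K_1)_{\pi_1\pi_2} \;\cong\; H(K_1)_{\pi_1} \times H(K_1)_{\pi_2} \;\cong\; \mathbb{Z}_{p_1} \times \mathbb{Z}_{p_2},
\]
and passing to unit groups,
\[
H(K_1)_{\pi_1\pi_2}^{\ast} \;\cong\; \mathbb{Z}_{p_1}^{\ast} \times \mathbb{Z}_{p_2}^{\ast}.
\]

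Next, I would pick generators $g_1$ of $\mathbb{Z}_{p_1}^{\ast}$ and $g_2$ of $\mathbb{Z}_{p_2}^{\ast}$ (these exist since $p_1,p_2$ are primes, so both groups are cyclic). Using the isomorphism, define $e$ to be the unique element of $H(K_1)_{\pi_1\pi_2}^{\ast}$ corresponding to the pair $(1,g_2)$, and $f$ to be the element corresponding to $(g_1,1)$. Concretely, this amounts to solving, via CRT, the simultaneous congruences $e \equiv 1 \pmod{\pi_1}$, $e \equiv g_2 \pmod{\pi_2}$ and similarly for $f$ with the roles reversed.

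Then Fermat's little theorem in $\mathbb{Z}_{p_2}^{\ast}$ gives $g_2^{\phi(p_2)} = g_2^{p_2-1} \equiv 1 \pmod{p_2}$, so
\[
e^{\phi(p_2)} \;\longleftrightarrow\; (1^{\phi(p_2)},\, g_2^{\phi(p_2)}) \;=\; (1,1),
\]
which means $e^{\phi(p_2)} \equiv 1 \pmod{\pi_1\pi_2}$. The analogous computation for $f^{\phi(p_1)}$ yields $(1,1)$ as well.

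The only real obstacle is justifying the CRT step in $H(K_1)$: one must observe that $H(K_1)$ is commutative (as noted explicitly in the paper), and that coprimality of the ideals follows from the fact that $N(\pi_1) = p_1$ and $N(\pi_2) = p_2$ are distinct primes, so $\pi_1$ cannot be an associate of $\pi_2$ and they generate distinct maximal ideals. Once that is in place, everything else is a routine application of the isomorphism and Fermat's little theorem, so the proof is essentially immediate given the earlier Theorem 2 of the excerpt.
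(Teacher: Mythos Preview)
Your proof is correct and essentially identical to the paper's: both use the Chinese Remainder decomposition $H(K_1)_{\pi_1\pi_2}^{\ast}\cong \mathbb{Z}_{p_1}^{\ast}\times\mathbb{Z}_{p_2}^{\ast}$ and take $e$ and $f$ to be generators of the two cyclic factors (the paper writes $H(K_1)_{\pi_1\pi_2}^{\ast}(\pi_1)$ for the subgroup $\{1\}\times\mathbb{Z}_{p_2}^{\ast}$, and your element $(1,g_2)$ is precisely its generator). The only cosmetic difference is that the paper first passes through the isomorphism $H(K_1)_{\pi_1\pi_2}\cong\mathbb{Z}_{p_1p_2}$ from Theorem~2 and applies CRT on the integer side, whereas you apply CRT directly in $H(K_1)$ and invoke Theorem~2 on each factor.
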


\begin{proof} Since  $p_1$ and $p_2$  are relatively prime
integers in $\mathcal{Z}$, $\pi _1$ and $\pi _2$ are relatively
primes in $H(K_1 )$. Using the basic algebraic knowledge and the
function (1), we get $\mathcal{Z}_{p_1 }  \cong H(K_1 )_{\pi _1
}$, $\mathcal{Z}_{p_2 }  \cong H(K_1 )_{\pi _2 }$ and
$\mathcal{Z}_{p_1 p_2 }  \cong H(K_1 )_{\pi _1 \pi _2 }$.
Moreover, we obtain as follows: $$H(K_1 )_{\pi _1 \pi _2 }^ * (\pi
_1 ) \cong \mathcal{Z}_{p_1 p_2 }^ *  (p_1 ) \cong
\mathcal{Z}_{p_2 }^ * \cong H(K_1 )_{\pi _2 }^ *  ,$$ $$H(K_1
)_{\pi _1 \pi _2 }^ * (\pi _2 ) \cong \mathcal{Z}_{p_1 p_2 }^ *
(p_2 ) \cong \mathcal{Z}_{p_1 }^ *   \cong H(K_1 )_{\pi _1 }^ *
.$$ Since $\pi _2$ is a prime quaternion integer, $H(K_1 )_{\pi _2
}^ *$ is a cyclic group. Therefore, $H(K_1 )_{\pi _2 }^ *$ has a
generator. So, $H(K_1 )_{\pi _1 \pi _2 }^ *  (\pi _1 )$ has a
generator, either. Let the generator be $e$. Then $e^{\phi (p_2 )}
\equiv 1\;(\bmod \pi _1 \pi _2 )$. In the same way, $H(K_1 )_{\pi
_1 \pi _2 }^ *  (\pi _2 )$ has a generator. Suppose that $f$  is
the generator of $H(K_1 )_{\pi _1 \pi _2 }^ *  (\pi _2 )$. Then
$f^{\phi (p_1 )}  \equiv 1\;(\bmod \pi _1 \pi _2 )$.
\end{proof}

\begin{proposition} Let $\pi _k  = a_k  + b_k (i + j + k)$ be
a prime in $H(K_1 )$ and let $p_k  = a_k^2  + 3b_k^2$ be distinct
odd primes in $\mathcal{Z}$. Then, there is an element $e_k$ of
$H(K_1 )_{\pi _1 \pi _2 ...\pi _k }^ *$ such that $e_k^{\phi (p_k
)}  \equiv 1\;(\bmod \pi _1 \pi _2 ...\pi _k )$, $k = 1,2,...,m$.
\end{proposition}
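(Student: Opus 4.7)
The plan is to extend Proposition 3 from two primes to $m$ primes via an iterated Chinese Remainder Theorem, exactly in parallel with how Proposition 2 extends Proposition 1; an induction on $m$ with Proposition 3 as the base case will do the job.

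First I would establish the relevant CRT decomposition. Because the norms $p_1,\ldots,p_m$ are pairwise coprime odd integers, the primes $\pi_1,\ldots,\pi_m$ are pairwise coprime in $H(K_1)$ (same reason as in the proof of Proposition 3). Combining Theorem 2 with the isomorphism (1) then yields, for every $k\le m$,
$$H(K_1)_{\pi_1\pi_2\cdots\pi_k} \;\cong\; \mathcal{Z}_{p_1p_2\cdots p_k} \;\cong\; \prod_{j=1}^{k} \mathcal{Z}_{p_j} \;\cong\; \prod_{j=1}^{k} H(K_1)_{\pi_j},$$
and hence on unit groups $H(K_1)_{\pi_1\cdots\pi_k}^{\,*} \cong \prod_{j=1}^{k} H(K_1)_{\pi_j}^{\,*}$. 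Each factor $H(K_1)_{\pi_j}^{\,*}$ is cyclic of order $\phi(p_j)$, because $\pi_j$ is a prime quaternion integer and the quotient $H(K_1)_{\pi_j}$ is therefore a field.

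Fixing $k$, the subgroup $\{1\}\times\cdots\times\{1\}\times H(K_1)_{\pi_k}^{\,*}$ (with the nontrivial factor in position $k$) pulls back under the CRT isomorphism to the subgroup of $H(K_1)_{\pi_1\cdots\pi_k}^{\,*}$ consisting of units congruent to $1$ modulo $\pi_j$ for every $j<k$. This pullback is cyclic of order $\phi(p_k)$; I would choose $e_k$ to be any generator of it. Then the order of $e_k$ divides $\phi(p_k)$, so $e_k^{\phi(p_k)}\equiv 1\pmod{\pi_1\pi_2\cdots\pi_k}$, which is the required conclusion.

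The only delicate point is the bookkeeping between the quaternion side and the rational side. But because $H(K_1)$ is commutative (as noted in Section II) and the map $\mu$ in (1) is a ring isomorphism, the ideal and subgroup structure transfer cleanly from $\mathcal{Z}$ to $H(K_1)$ with no additional obstacle from non-commutativity of $H(\mathcal{Z})$. This is presumably why the authors treat Proposition 2 as immediate from Proposition 1, and I expect Proposition 4 to admit the same one-line style of proof, namely \emph{``certain from Proposition 3.''}
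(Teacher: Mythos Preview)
Your proposal is correct and coincides with the paper's approach: the paper's entire proof is the single sentence ``This is clear from Proposition 3,'' and your CRT argument is precisely the spelled-out version of that sentence (indeed you anticipated this yourself in your final paragraph). The only remark is that the paper's notation is ambiguous as to whether the modulus is $\pi_1\cdots\pi_k$ or $\pi_1\cdots\pi_m$, but your construction works verbatim in either reading by placing the nontrivial CRT factor in position $k$.
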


\begin{proof} This is clear from Proposition 3.
\end{proof}

\begin{theorem} Let $\pi _1  = a + b(i + j + k),\,\pi _2  = c + d(i + j +
k)$ be primes in $H(K_1 )$ and let $p_1  = a^2  + 3b^2 ,\,p_2 =
c^2  + 3d^2$ be odd primes in $\mathcal{Z}$. Then, we can always
write cyclic codes of length $\phi (p_1 )$ and $\phi (p_2 )$ over
$H(K_1 )_{\pi _1 \pi _2 }$.Moreover, the generator polynomials of
these codes are first degree monic polynomials. Therefore, these
codes are free $H(K_1 )_{\pi _1 \pi _2 }$ module.
\end{theorem}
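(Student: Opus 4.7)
The plan is to imitate the construction used in Theorem 2, but applied in the composite ring $H(K_1)_{\pi_1\pi_2}$ using the specific roots of unity produced by Proposition 3. Proposition 3 guarantees elements $e,f \in H(K_1)_{\pi_1\pi_2}^{*}$ satisfying $e^{\phi(p_2)}\equiv 1$ and $f^{\phi(p_1)}\equiv 1 \pmod{\pi_1\pi_2}$. These are exactly the roots I need: $f$ is a root of $x^{\phi(p_1)}-1$ and $e$ is a root of $x^{\phi(p_2)}-1$ in the ring $H(K_1)_{\pi_1\pi_2}$. The two cyclic codes in the statement will be the principal ideals generated by $(x-f)$ and $(x-e)$ in the appropriate quotient rings.

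First I would treat the length $\phi(p_1)$ case. Since $f^{\phi(p_1)}\equiv 1\pmod{\pi_1\pi_2}$, evaluating the monic polynomial $x^{\phi(p_1)}-1$ at $x=f$ gives $0$ modulo $\pi_1\pi_2$. Because the divisor $x-f$ is monic, ordinary polynomial long division over $H(K_1)_{\pi_1\pi_2}[x]$ is valid and produces a quotient $Q(x)$ with
\[
x^{\phi(p_1)}-1 = (x-f)\,Q(x) \pmod{\pi_1\pi_2}.
\]
Thus $\langle x-f\rangle$ is a well-defined ideal in $H(K_1)_{\pi_1\pi_2}[x]/\langle x^{\phi(p_1)}-1\rangle$, which under the bijection (2) corresponds to a cyclic code of length $\phi(p_1)$ over $H(K_1)_{\pi_1\pi_2}$. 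The identical argument with $e$ in place of $f$ and $\phi(p_2)$ in place of $\phi(p_1)$ delivers a cyclic code of length $\phi(p_2)$.

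For the freeness claim, I would appeal to the remark that immediately follows Theorem 2: since the generator polynomial $x-f$ (respectively $x-e$) is monic of degree one, its leading coefficient is the unit $1$, so the shifted rows of the generator matrix contain no zero divisors and the corresponding $H(K_1)_{\pi_1\pi_2}$-module is free.

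The one point that requires genuine attention, and which I expect to be the main subtlety, is the legitimacy of long division in $H(K_1)_{\pi_1\pi_2}[x]$: this is not a field and not even a domain in general, so divisibility by an arbitrary polynomial can fail. However, divisibility by a \emph{monic} polynomial is always available, and that is exactly the case here; once this is noted, the rest of the argument is a routine transcription of the reasoning in Theorem 2 into the composite-modulus setting supplied by Proposition 3.
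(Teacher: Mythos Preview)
Your proposal is correct and follows essentially the same route as the paper: invoke Proposition~3 to obtain the unit $e$ (and, symmetrically, $f$), factor $x^{\phi(p_2)}-1=(x-e)D(x)$ over $H(K_1)_{\pi_1\pi_2}$, and take $g(x)=x-e$ as generator polynomial, observing that the monic degree-one generator yields a generator matrix with no zero-divisor rows and hence a free module. If anything, your write-up is more careful than the paper's, since you explicitly justify the division step via monicity and treat both lengths $\phi(p_1)$ and $\phi(p_2)$, whereas the paper only spells out the $\phi(p_2)$ case.
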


\begin{proof} From Proposition 3, we can find an element of $H(K_1 )_{\pi _1 \pi _2
}$ such that $e^{\phi (p_2 )}  \equiv 1\;(\bmod \pi _1 \pi _2 )$.
Thus, we factorize the polynomial $x^{\phi (p_2 )}  - 1$ over
$H(K_1 )_{\pi _1 \pi _2 }$ as $x^{\phi (p_2 )}  - 1 = (x -
e)D(x)(\bmod \pi _1 \pi _2 )$. If we take the generator polynomial
as $g(x)=x-e$ , then the generator polynomial $g(x)$ forms the
generator matrix whose all components of any rows do not consist
of zero divisors.
\end{proof}

We now consider a simple example with regard to Theorem 3.

\begin{example} Let $\pi$ be $2+i+j+k$. The polynomial $x^{21}
+1$ factors over $H(K_1 )_{\pi ^2 } $ as $x^{21}  + 1 = (x -
\alpha ).(x^{20}  + \alpha x^{19}  + \alpha ^2 x^{18}  + \alpha ^3
x^{17}  + ... + \alpha ^{19} x + \alpha ^{20} )$, where $\alpha  =
1 - i - j - k$. The powers of $\alpha$ are shown in Table I. If we
choose the generator polynomial as $g(x) = x - \alpha $, then the
generator matrix is as follows: $$G = \left(
{\begin{array}{*{20}c}
   { - \alpha } & 1 & 0 & 0 &  \cdots  & 0  \\
   0 & { - \alpha } & 1 & 0 &  \cdots  & 0  \\
   0 & 0 & { - \alpha } & 1 &  \ddots  & 0  \\
    \vdots  &  \vdots  & {} &  \ddots  &  \ddots  & {}  \\
   0 & 0 &  \cdots  & 0 & { - \alpha } & 1  \\
\end{array}} \right)_{20{\rm x21}} .$$ The code $C$ generated by the generator
matrix $G$ can correct one error having quaternion Mannheim weight
of one.
\end{example}

\begin{center}
{\scriptsize {Table I: Powers of the element $\alpha=1-i-j-k$
which is root of $x^3+1$.}} {\small \centering}
\begin{tabular}{|c|c|c|c|c|c|c|c|}
  \hline
  $s$ & $\alpha^s$ & $s$ & $\alpha^s$ & $s$ & $\alpha^s$ & $s$ & $\alpha^s$ \\
  \hline
  0   &        1     &  6 & $3+i+j+k$        & 12& $4-2i-2j-2k$& 18 & $-6$ \\
  1   & $1-i-j-k$    &  7 & $-6-i-j-k$       & 13& $2i+2j+2k$  & 19 & $5+i+j+k$ \\
  2   & $-1+2i+2j+2k$&  8 & $2$              & 14& $5-2i-2j-2k$& 20 & $-3+i+j+k$\\
  3   & $4-i-j-k$    &  9 & $2-2i-2j-2k$     & 15& $1+i+j+k$   & 21 & $-1$      \\
  4   & $2-i-j-k$    &  10 &$-3$             & 16& 4           & 22 & $ - \alpha  =  - 1 + i + j + k$
          \\
  5   & $i+j+k$      &  11& $-4-i-j-k$       & 17& 5           & 23 & $ - \alpha ^2  = 1 - 2i - 2j - 2k$
          \\
  \hline
\end{tabular}
\end{center}

\end{document}